\newtheorem{lemma}{Lemma}
\newtheorem{theorem}{Theorem}
\newcommand{\bea}{\begin{eqnarray}}
\newcommand{\eea}{\end{eqnarray}}
\newcommand{\beq}{\begin{equation}}
\newcommand{\eeq}{\end{equation}}
\newcommand{\enn}{\nonumber \end{equation}}
\title[Generalizing Bagarello's operator for PDEs]
{Generalizing Bagarello's operator approach to solve a class of  partial 
differential equations\footnote{Preprint: ICMPA-MPA/2015/10 }}
\author{Jean Ghislain Compaor\'e}
\address[J.G.C.]{International Chair in Mathematical Physics and Applications,
ICMPA-UNESCO Chair, 072 BP 50, Cotonou, Rep. of Benin}
\email{ghislaincompaore@yahoo.fr}
\author{Villevo Adanhounm\`e}
\address[V.A.]{International Chair in Mathematical Physics and Applications,
ICMPA-UNESCO Chair, 072 BP 50, Cotonou, Rep. of Benin}
\email{adanhounmvillvo@yahoo.fr}%
\author{Mahouton  Norbert Hounkonnou}
\address[M.N.H.]{International Chair in Mathematical Physics and Applications,
ICMPA-UNESCO Chair, 072 BP 50, Cotonou, Rep. of Benin}
\email{norbert.hounkonnou@cipma.uac.bj}
\begin{document}

\maketitle 

\begin{abstract}
The non-commutative strategy developed by Bagarello
(see {\it Int. Jour. of Theoretical Physics}, \textbf{43}, issue 12 (2004), 
p. 2371 - 2394) for the analysis of systems of ordinary differential 
equations (ODEs) is extended to a class of 
partial differential equations (PDEs), namely evolution equations 
and Navier-Stokes equations. Systems of 
PDEs are solved using an unbounded self-adjoint, densely defined, Hamiltonian operator and  a recursion relation which provides  a multiple 
commutator and  a power series solution.
Numerous examples are given in this work.
\\
{\bf Keywords:} Evolution equations, Navier-Stokes equations, Bagarello's operator approach.\\
{\bf MSC(2010):} 35Q30, 35Q35, 35K05, 35K57, 65M99, 65N99, 93C20.
\end{abstract}

\tableofcontents

\section{Introduction}
Evolution equations and linearized Navier-Stokes equations are extremely important in Mathematics and Physics, especially for applications in engineering as they are often used to obtain vital information on flow phenomena such as the velocity distribution, the flow pattern and pressure losses. However, it is a well known fact that exact solutions of evolution equations are rather difficult to obtain. Equally, till date, very few exact solutions of the Navier-Stokes equations are known. This is largely due to the complexity of the system of differential equations involved.

In the absence of a general solution, it is often convenient to experiment with models so as to obtain information on the flow phenomena. Chengri Jin and Mingzhu Liu \cite{Chengri} used the Adomian decomposition method to solve a kind of evolution equation. Lu \cite{Lu} applied the \textquotedblleft background\textquotedblright method of the arbitrary Prandtl number problem to the Navier-Stokes equations in order to derive a scaling lower bound on the space-time averaged temperature of the layer along with an explicit prefactor and then used a multiple boundary layer asymptotic theory to sharpen the estimate, thus increasing the prefactor in the lower bound by another factor. The time-dependent discretized Navier-Stokes equations with Chorin's projection scheme using the continuous Galerkin finite element method in time and space were studied by Kharrat and Mghazli\cite{Kharrat}. 
More recently, the existence of global regular solutions to Navier-Stokes equations was proven by Zadrzynska and Zajaczkowski \cite{Zadrzynska} while the exact solutions of rotating shallow water were obtained by Chesnokov \cite{Chesnokov} using group analysis.

In this paper, we investigate the analytical solutions to two categories of PDE using Bagarello's approach. The first category comprises evolution equations which arise in many modeling problems in physics and hydrodynamics. Well known examples include the KdV, MKdV, BBM, Burgers, regularized long wave (RLW) and KdV-KSV equations, which are all special cases of the general equation in \cite{Chengri}. The second category is made up of Navier-Stokes equations and heat propagation equations. Equations in this category are used to describe phenomena in continuum mechanics and are derived from the mathematical model of the diffusion of particles in a medium.
 
The rest of the paper is organized as follows. In Section 2, we present some details of the models we analyze. In Section 3, we obtain the analytical solutions to evolution equations, linearized Navier-Stokes equations and the pressure using a non-commutative approach. The validity of the technique is verified through illustrative examples. This is followed by the conclusion.
 
\section{Analytical solution}
Our aim in this section is to solve all the types of equations enumerated above.
\subsection{Evolution equations}
We consider the general evolution equation of the form
\begin{eqnarray} 
\dfrac{\partial u}{\partial t}=\sum_{m= 0}^Ma_m\dfrac{\partial^m u}{\partial x^m}
+\sum_{m= 0}^N b_m\dfrac{\partial^m u^{k+1}}{\partial x^m}
+ c\dfrac{\partial^{i+1} u}{\partial x^i\partial t} +f(x,t),\,\,-\infty < x < +\infty,\,\, t> 0 \label{1} 
\end{eqnarray}
subjected to the initial condition
\begin{eqnarray} u(x,t)|_{t=0}= h(x)\label{2} \end{eqnarray}
where $f$ and $h$ are given functions, $a_m, b_m$ and $c$ are real constants, 
$k$ and $i$ are positive integers, $M$ and $N$ are nonnegative integers.

We suppose that the solution of the previous Cauchy problem exists and is 
unique. Then we solve this problem by Bagarello's approach \cite{Bagarello97}, 
based on a suggestion drawn from quantum mechanics. Indeed, given a 
quantum mechanical system $S$ and the related set of observables 
$O_S$, that is, the set of all the self-adjoint bounded (or more often unbounded) 
operators related to $S$, the evolution of any observable $Y\in O_S$ 
satisfies the Heisenberg equation of motion (HOEM)
\begin{eqnarray}
\dfrac{\partial}{\partial t} Y (t, x) = i[H,Y(t, x)]. \end{eqnarray}
Here $[A,B] := AB- BA$ is the commutator between $A, B\in O_S$, $H$ is 
assumed to be a densely defined self-adjoint Hamiltonian operator of 
the system acting on some Hilbert space $\mathcal{H}.$ The initial 
condition $Y^0=h(x)$ is also considered as an operator acting on the 
Hilbert space $\mathcal{H}$.\\
The following statement holds:
  \begin{theorem} A formal solution of the HOEM is 
$ Y(t,x)=e^{itH}Y^0e^{-itH},$ where $Y^0$ is the initial condition of 
$Y(t,x)$ and $H$ does not depend explicitly on time.
 Furthermore if $H$ is bounded, we get $Y (t,x) =\sum_{k=0}^{\infty}
\dfrac{ [it)^k}{k!}[H,Y^0]_k$ where $[A,B]_k$ is the multiple commutator 
defined recursively as: $[A,B]_0 = B; [A,B]_k = [A,[A,B]_{k-1}]$.
\end{theorem}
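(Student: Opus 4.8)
The plan is to verify directly that the proposed operator-valued function $Y(t,x)=e^{itH}Y^0e^{-itH}$ satisfies the Heisenberg equation, and then to expand the conjugation into a power series when $H$ is bounded. First I would differentiate $Y(t,x)$ with respect to $t$, treating the one-parameter family $e^{itH}$ as a (formal) operator exponential with $\frac{d}{dt}e^{itH}=iHe^{itH}=e^{itH}(iH)$, using that $H$ is time-independent so that $H$ commutes with every power of $itH$ and hence with $e^{\pm itH}$. Applying the product rule to $Y(t,x)=e^{itH}Y^0e^{-itH}$ gives
\begin{eqnarray}
\dfrac{\partial}{\partial t}Y(t,x)=iHe^{itH}Y^0e^{-itH}+e^{itH}Y^0(-iH)e^{-itH}
=iH\,Y(t,x)-i\,Y(t,x)\,H=i[H,Y(t,x)],
\end{eqnarray}
which is exactly the HOEM; moreover at $t=0$ one has $e^{0}=I$, so $Y(0,x)=Y^0=h(x)$, matching the initial condition. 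This establishes the first assertion at the formal level.

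For the second assertion, assuming $H$ bounded, I would use the Hadamard lemma (the standard $\mathrm{BCH}$-type identity) $e^{A}Be^{-A}=\sum_{k=0}^{\infty}\frac{1}{k!}[A,B]_k$ with $A=itH$ and $B=Y^0$. One justifies the termwise manipulation by noting that, since $H$ is bounded, $e^{itH}=\sum_{n\ge 0}\frac{(it)^n}{n!}H^n$ converges in operator norm, so the Cauchy product of the three series for $e^{itH}$, $Y^0$, $e^{-itH}$ converges absolutely and may be rearranged. Collecting the terms of each total order in $H$ and using the multilinearity of the commutator, the coefficient of the order-$k$ part is $\frac{(it)^k}{k!}[H,Y^0]_k$, where $[H,Y^0]_k$ is the multiple commutator defined recursively by $[A,B]_0=B$ and $[A,B]_k=[A,[A,B]_{k-1}]$. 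This can also be seen more cleanly by checking that $F(t):=e^{itH}Y^0e^{-itH}$ and $G(t):=\sum_{k\ge 0}\frac{(it)^k}{k!}[H,Y^0]_k$ satisfy the same linear differential equation $\dot{X}=i[H,X]$ with the same initial value $X(0)=Y^0$, and invoking uniqueness for this operator-valued ODE (valid because the map $X\mapsto i[H,X]$ is a bounded linear operator on $B(\mathcal H)$ when $H$ is bounded).

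The main obstacle is entirely one of rigor rather than of idea: for unbounded $H$ the exponential $e^{itH}$ exists as a strongly continuous unitary group by Stone's theorem, but $e^{itH}Y^0e^{-itH}$ need not be differentiable in any strong or norm sense unless $Y^0$ is suitably compatible with the domain of $H$, so the first part must be read as a \emph{formal} statement (as the theorem says). In the bounded case the only delicate point is justifying the rearrangement of the triple series and the convergence of $\sum_k \frac{(it)^k}{k!}[H,Y^0]_k$; this follows from the crude bound $\|[H,Y^0]_k\|\le (2\|H\|)^k\|Y^0\|$, which gives an entire function of $t$ and legitimizes all termwise operations. I would present the bounded case via the ODE-uniqueness argument to keep the proof short, relegating the series bound to a remark.
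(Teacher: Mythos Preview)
Your proposal is correct and follows essentially the same approach as the paper: differentiate $e^{itH}Y^0e^{-itH}$ using the product rule to recover $i[H,Y(t,x)]$, then expand into the Hadamard series when $H$ is bounded. The paper's own proof is terser---it simply asserts the series expansion in the bounded case---so your additional justification via the norm bound $\|[H,Y^0]_k\|\le (2\|H\|)^k\|Y^0\|$ and the ODE-uniqueness argument goes beyond what the paper provides, but is entirely consistent with it.
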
 
\begin{proof} Indeed, we have  \begin{eqnarray*}
\dfrac{\partial}{\partial t} Y (t,x)&=&\dfrac{\partial}{\partial t}[e^{itH}]Y^0e^{-itH}+
e^{itH}Y^0\dfrac{\partial}{\partial t}[e^{-itH}]\Longleftrightarrow\\
\dfrac{\partial}{\partial t} Y (t,x)&=& iHe^{itH}Y^0e^{-itH}+
e^{itH}Y^0(-iH)e^{-itH}\Longleftrightarrow\\
\dfrac{\partial }{\partial t} Y (t,x)&=& i[H, Y (t,x)]. \end{eqnarray*}
If $H$ is bounded, $Y(t,x)=e^{itH}Y^0e^{-itH}$ can be expanded as
\begin{eqnarray*} Y (t,x) =\sum_{k=0}^{\infty} \dfrac{(it)^k}{k!}[H,Y^0]_k.
\end{eqnarray*}\end{proof}
In order to apply Bagarello's technique to the initial value problem 
we need the following  
\begin{lemma} Assume $$ u(x,t)=\sum_{n\geq 0}\dfrac{(it)^n}{n!}w_n^1,\,\,\,
 w_n^1=[H, Y^0]_n.$$
Then, we have
\begin{eqnarray*} u^p(x,t)= \sum_{n\geq 0}\dfrac{(it)^n}{n!}w_n^p,
\end{eqnarray*}
where $ w_n^p= \sum_{k=0}^n C_n^kw_k^1w_{n-k}^{p-1},\,\,\, p\geq 2, 
\,\,\, C_n^k=\dfrac{n!}{k!(n-k)!}$.\end{lemma}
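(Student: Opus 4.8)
The plan is to argue by induction on $p\geq 2$, using the Cauchy product of power series as the main tool. For the base case $p=2$, I would simply multiply the series for $u$ by itself,
$u^2=\bigl(\sum_{k\geq 0}\tfrac{(it)^k}{k!}w_k^1\bigr)\bigl(\sum_{j\geq 0}\tfrac{(it)^j}{j!}w_j^1\bigr)$,
and collect the terms with $k+j=n$. The coefficient of $(it)^n$ is then $\sum_{k=0}^n\tfrac{1}{k!\,(n-k)!}w_k^1 w_{n-k}^1=\tfrac{1}{n!}\sum_{k=0}^n C_n^k\,w_k^1 w_{n-k}^1=\tfrac{1}{n!}w_n^2$, which is exactly the asserted formula. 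The non-commutativity of $H$ and $Y^0$ is harmless here: the Cauchy product keeps each product $w_k^1 w_{n-k}^1$ in the prescribed left-to-right order, so no reordering of operator factors is ever performed.

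For the inductive step I would assume $u^{p-1}=\sum_{n\geq 0}\tfrac{(it)^n}{n!}w_n^{p-1}$ and write $u^p=u\cdot u^{p-1}$. Applying the same Cauchy-product computation to the product of $\sum_k\tfrac{(it)^k}{k!}w_k^1$ and $\sum_j\tfrac{(it)^j}{j!}w_j^{p-1}$, the coefficient of $(it)^n$ in $u^p$ is $\sum_{k=0}^n\tfrac{1}{k!\,(n-k)!}w_k^1 w_{n-k}^{p-1}=\tfrac{1}{n!}\sum_{k=0}^n C_n^k\,w_k^1 w_{n-k}^{p-1}=\tfrac{1}{n!}w_n^p$. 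This closes the induction and yields the lemma.

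The only genuine obstacle is to justify rearranging the double series $\sum_{k,j}\tfrac{(it)^{k+j}}{k!\,j!}w_k^1 w_j^{p-1}$ into the single series indexed by $n=k+j$. In the bounded setting underlying Theorem 1 (where $Y(t,x)=\sum_k\tfrac{(it)^k}{k!}[H,Y^0]_k$ converges in operator norm) one has the estimate $\|w_k^1\|=\|[H,Y^0]_k\|\leq (2\|H\|)^k\|Y^0\|$, and $u^{p-1}$ is a bounded-operator-valued real-analytic function of $t$, so both factor series converge absolutely in the Banach algebra $\mathcal{B}(\mathcal{H})$; the Cauchy product is then legitimate by the standard absolute-convergence (Mertens) theorem. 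In the unbounded case the identity should be read as one between formal power series in $t$, which is all the subsequent recursion requires. I would state this convergence/formal-series caveat explicitly at the start of the proof so that each manipulation of the series is on firm footing before the induction is run.
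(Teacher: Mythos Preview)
Your proof is correct and follows essentially the same route as the paper: both argue by induction on $p$ using the Cauchy product, with the base case $p=2$ obtained by multiplying the series for $u$ by itself and the inductive step by writing $u^{p}=u\cdot u^{p-1}$ (the paper writes $u^{p+1}=u\cdot u^p$, which is the same thing). Your explicit remarks on convergence in the bounded case and on reading the identity formally in the unbounded case go a bit beyond what the paper states, but the underlying argument is identical.
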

\begin{proof}
Using the product of series
\begin{eqnarray*} \sum_{j\geq 0}\sum_{k\geq 0}\dfrac{B^j}{j!}\dfrac{A^k}{k!}=
\sum_{n\geq 0}\sum_{k=0}^n\dfrac{1}{k!(n-k)!}B^kA^{n-k},\end{eqnarray*}
we can write
\begin{eqnarray*}
u^2(x,t)&=& \sum_{j\geq 0}\sum_{k\geq 0}\bigg\{\dfrac{(it)^j}{j!}[H, Y^0]_j\bigg\}
\bigg\{\dfrac{(it)^k}{k!}[H, Y^0]_k\bigg\}\\
&=& \sum_{n\geq 0}\sum_{k=0}^n\dfrac{1}{k!(n-k)!}(it)^k[H, Y^0]_k(it)^{n-k}
[H, Y^0]_{n-k}= \sum_{n\geq 0}\dfrac{(it)^n}{n!}w_n^2;\\
u^3(x,t)&=& \sum_{j\geq 0}\sum_{k\geq 0}\bigg\{\dfrac{(it)^j}{j!}[H, Y^0]_j\bigg\}
\bigg\{\dfrac{(it)^k}{k!}w_k^2\bigg\}\\
&=& \sum_{n\geq 0}\sum_{k=0}^n\dfrac{1}{k!(n-k)!}(it)^k[H, Y^0]_k(it)^{n-k}w_{n-k}^2
= \sum_{n\geq 0}\dfrac{(it)^n}{n!}w_n^3,
\end{eqnarray*}
where $ w_n^2= \sum_{k=0}^n C_n^k w_k^1w_{n-k}^1$ and $ w_n^3= 
\sum_{k=0}^n C_n^kw_k^1w_{n-k}^2$.\\ By induction we get 
\begin{eqnarray*} u^p(x,t)= \sum_{n\geq 0}\dfrac{(it)^n}{n!} w_n^p,\,\,\, w_n^p
= \sum_{k=0}^n C_n^kw_k^1w_{n-k}^{p-1},\,\,\, p\geq 2.\end{eqnarray*}
In fact \begin{eqnarray*}  u^{p+1}(x,t)&=& u(x,t)u^p(x,t)=
\sum_{n\geq 0}\sum_{k=0}^n \dfrac{1}{k!(n-k)!}(it)^kw_k^1 (it)^{n-k}w_{n-k}^p,
\end{eqnarray*} yielding
\begin{eqnarray*} u^{p+1}(x,t)&=& \sum_{n\geq 0}\dfrac{(it)^n}{n!}
\sum_{k=0}^nC_n^kw_k^1 w_{n-k}^{(p+1)-1} \end{eqnarray*}
\end{proof}
Now we state the first of the main results through the following:
\begin{theorem}
If $u(x, t)=\sum_{n\geq 0}\dfrac{(it)^n}{n!}w_n^1$ is a solution of 
the problem (\ref{1}), (\ref{2}), for $f(x, t)=0$, then $w_n^1$ 
must satisfy the following recursive relation: 
\begin{eqnarray}  w_0^1&=& Y^0\\ 
 i w_{n+1}^1- ic \dfrac{\partial^i}{\partial x^i}w_{n+1}^1&=&
\sum_{m= 0}^Ma_m\dfrac{\partial^m }{\partial x^m}w_n^1
+\sum_{m=0}^N b_m\dfrac{\partial^m }{\partial x^m}w_n^{k+1},\,\, \forall n\geq 0,\label{3}
\end{eqnarray}
where $ w_n^1= [H, Y^0]_n,\,\,\,w_n^p= \sum_{k=0}^n C_n^kw_k^1w_{n-k}^{p-1}.$
\end{theorem}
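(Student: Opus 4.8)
The plan is to substitute the postulated power series $u(x,t)=\sum_{n\geq 0}\frac{(it)^n}{n!}w_n^1$ directly into the evolution equation \eqref{1} with $f\equiv 0$, and then to match coefficients of equal powers of $t$. First I would differentiate term by term: $\frac{\partial u}{\partial t}=\sum_{n\geq 0}\frac{(it)^n}{n!}(i\,w_{n+1}^1)$, where the shift in index comes from $\frac{d}{dt}\frac{(it)^{n+1}}{(n+1)!}=\frac{i(it)^n}{n!}$. For the mixed term $c\,\frac{\partial^{i+1}u}{\partial x^i\partial t}$ the same index shift applies after bringing $\frac{\partial^i}{\partial x^i}$ inside the sum, giving $\sum_{n\geq 0}\frac{(it)^n}{n!}\bigl(ic\,\frac{\partial^i}{\partial x^i}w_{n+1}^1\bigr)$. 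For the linear spatial term, differentiation in $x$ commutes with the $t$-summation, producing $\sum_{n\geq 0}\frac{(it)^n}{n!}\sum_{m=0}^M a_m\frac{\partial^m}{\partial x^m}w_n^1$.

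The one genuinely nontrivial ingredient is the nonlinear term $\sum_{m=0}^N b_m\frac{\partial^m u^{k+1}}{\partial x^m}$, and here I would invoke the Lemma proved just above: it gives $u^{k+1}(x,t)=\sum_{n\geq 0}\frac{(it)^n}{n!}w_n^{k+1}$ with $w_n^{k+1}=\sum_{j=0}^n C_n^j w_j^1 w_{n-j}^k$. Applying $\sum_{m=0}^N b_m\frac{\partial^m}{\partial x^m}$ and again interchanging with the $t$-series yields $\sum_{n\geq 0}\frac{(it)^n}{n!}\sum_{m=0}^N b_m\frac{\partial^m}{\partial x^m}w_n^{k+1}$. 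At this point all four pieces of \eqref{1} are expressed as power series in $t$ with the same basis $\frac{(it)^n}{n!}$, so equating coefficients of $(it)^n/n!$ for each $n\geq 0$ gives exactly the stated recursion
\begin{eqnarray*}
i\,w_{n+1}^1 - ic\,\frac{\partial^i}{\partial x^i}w_{n+1}^1
= \sum_{m=0}^M a_m\,\frac{\partial^m}{\partial x^m}w_n^1
+ \sum_{m=0}^N b_m\,\frac{\partial^m}{\partial x^m}w_n^{k+1}.
\end{eqnarray*}
The base case $w_0^1=Y^0$ is immediate by setting $t=0$ in the series and comparing with the initial condition \eqref{2}, since only the $n=0$ term survives.

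I expect the main obstacle to be purely a matter of bookkeeping rather than deep mathematics: one must be careful that the index shift $n\mapsto n+1$ is applied consistently to both terms that carry a $\partial/\partial t$ (namely $\partial u/\partial t$ and the mixed term), while the two purely spatial sums keep index $n$. A secondary point worth a remark is that all interchanges of differentiation and summation, as well as the series product used in the Lemma, are justified at the formal (symbolic) level only — consistent with the paper's stated standing assumption that a unique solution exists and with the word ``formal'' used in Theorem 1 — so I would not attempt to track convergence or domain-of-definition issues for the unbounded operator $H$ here. Once the coefficient comparison is written out, collecting the $w_{n+1}^1$ contributions on the left and everything in $w_n^1, w_n^{k+1}$ on the right finishes the proof.
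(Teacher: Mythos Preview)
Your proposal is correct and follows essentially the same approach as the paper: substitute the series expansions of $u$ and $u^{k+1}$ (the latter via the preceding Lemma) into equation~(\ref{1}) with $f=0$, then equate coefficients of $(it)^n/n!$ to obtain the recursion, with $w_0^1=Y^0$ from the initial condition. Your write-up is in fact more detailed than the paper's, which simply states the substitution and reads off the recursive relation.
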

\begin{proof} Setting the expansions of $u(x, t)$ and $u^{k+1}(x, t)$ 
in the equation (\ref{1}),  we get 
  \begin{eqnarray*}
\sum_{n\geq 0}\dfrac{(it)^n}{n!}\Bigg\{i w_{n+1}^1- ic \dfrac{\partial^i}{\partial x^i}
w_{n+1}^1-\sum_{m= 0}^Ma_m\dfrac{\partial^m }{\partial x^m}w_n^1
-\sum_{m=0}^N b_m\dfrac{\partial^m }{\partial x^m}w_n^{k+1}\Bigg\}=0,
\end{eqnarray*}
which allows the following recursive relation 
 \begin{eqnarray*}  w_0^1&=& Y^0\\
i w_{n+1}^1- ic \dfrac{\partial^i}{\partial x^i}w_{n+1}^1&=&\sum_{m= 0}^Ma_m
 \dfrac{\partial^m }{\partial x^m}w_n^1 +\sum_{m=0}^N b_m\dfrac{\partial^m }{\partial x^m}w_n^{k+1}
 \end{eqnarray*} \end{proof}
\paragraph{Remark} One can find $ w_n^1,\,\,\,n=1, 2, 3,\ldots $ as 
particular solutions of the relation (\ref{3}). 
\subsection{Diffusion equations}
Processes of heat propagation and of diffusion of particles in a medium are 
described by the general diffusion equation \cite{Agoshkov}: \begin{eqnarray}
\rho\dfrac{\partial u}{\partial t}-\nabla.(p \nabla u)+ q u = F(x,y,z,t),\label{4}
\end{eqnarray}
where $\rho$ is the coefficient of porosity of the medium and $p$ and 
$q$ characterize its properties.
In examining heat propagation, by $u(x,y,z,t)$ we denote the temperature 
of the medium at the point $(x, y, z)$ at time $t$. 
Assuming that the medium is isotropic, $\rho(x,y,z), c(x,y,z)$ and $k(x,y,z)$  
denote  its density, specific heat capacity and heat conduction 
coefficient, respectively,  while $F(x,y,z,t)$ represents the intensity of heat sources. 
The process of heat propagation is described by a function which satisfies an equation of the type:
\begin{eqnarray} c\rho\dfrac{\partial u}{\partial t}-\nabla.(k \nabla u) 
= F(x,y,z,t).\label{5} \end{eqnarray}
If the medium is homogeneous, i.e. $c, \rho$ and $k$ are constant, 
equation (\ref{5}) has the form:
\begin{eqnarray} \dfrac{\partial }{\partial t}u= a^2\Delta u + f(x,y,z,t),\label{6}
\end{eqnarray}
where $a^2=\dfrac{k}{c\rho}$, $f=\dfrac{F}{c\rho}$ and the last equation 
is the heat conductivity equation. For a complete description of the sources 
of heat propagation, it is necessary to specify the initial distribution 
of temperature $u$ in the medium (initial conditions) and the conditions 
at the boundary of this medium (boundary conditions).\\
Suppose that the equation (\ref{6}) is subjected to the initial condition 
\begin{eqnarray}
u|_{t=0}=u_0(x,y,z).\label{7}
\end{eqnarray}
For given functions $u_0$ and $ f,$ one can solve the initial value problem using Bagarello's approach. 

Here we formulate the second result:
\begin{theorem}
Assume that $u_0\in\mathbf{C}^{\infty}(\mathbf{R}^3)$, the series $$\sum_{k=0}^{+\infty}\dfrac{\delta^k}{k!}\Delta^ku_0(x, y, z),\,\,\, \delta > 0$$ and those obtained by differentiating termwise up to second order converges uniformly on each bounded and closed subset $\Omega\subset\mathbf{R}^3$.
 Let $$ u_1(x,y,z,t)=\sum_{k=0}^{+\infty}\dfrac{(a^2)^kt^k}{k!}\Delta^ku_0(x,y,z),
 \,\,\,0< t< \dfrac{\delta}{a^2}$$ be a solution of the equation
\begin{eqnarray}  \dfrac{\partial u}{\partial t}= a^2\Delta u \label{8} 
\end{eqnarray} 
satisfying the initial condition (\ref{7}) where $\Delta$ is the Laplacian operator.\\
If $$u_2(x,y,z,t)=\sum_{k\geq 0}\dfrac{(it)^k}{k!}w_k $$ is another 
solution of equation $(\ref{8})$ satisfying the condition (\ref{7}), then we have  
\begin{eqnarray}
w_k =(-ia^2)^k\Delta^k u_0(x,y,z ),\,\,\,k\geq 0, \label{9}
\end{eqnarray}
where $w_k=[H,Y^0]_k$ and $Y^0= u_0(x,y,z ).$ 
\end{theorem}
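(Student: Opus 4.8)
The plan is to substitute the postulated expansion $u_2(x,y,z,t)=\sum_{k\ge 0}\dfrac{(it)^k}{k!}w_k$ into the heat equation~(\ref{8}) and to extract a recursion for the coefficients $w_k$, exactly mimicking the argument already used for the general evolution equation. First I would differentiate the series in $t$ term by term and reindex,
\begin{eqnarray*}
\dfrac{\partial u_2}{\partial t}=\sum_{k\ge 1}\dfrac{i^{k}t^{k-1}}{(k-1)!}w_k
=\sum_{k\ge 0}\dfrac{(it)^k}{k!}\,i\,w_{k+1},
\end{eqnarray*}
and apply the Laplacian term by term, $a^2\Delta u_2=\sum_{k\ge 0}\dfrac{(it)^k}{k!}\,a^2\Delta w_k$. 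Inserting both into~(\ref{8}) gives $\sum_{k\ge 0}\dfrac{(it)^k}{k!}\big(i\,w_{k+1}-a^2\Delta w_k\big)=0$.

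Since a power series in $t$ that vanishes on a nondegenerate interval has all its coefficients equal to zero, this forces the recursion $i\,w_{k+1}=a^2\Delta w_k$ for every $k\ge 0$, i.e. $w_{k+1}=-i\,a^2\Delta w_k$ because $1/i=-i$. Evaluating the series for $u_2$ at $t=0$ and using the initial condition~(\ref{7}) gives $w_0=u_0(x,y,z)$, which agrees with $w_0=Y^0=u_0$. A one-line induction on $k$ then yields $w_k=(-i a^2)^k\Delta^k u_0(x,y,z)$, which is~(\ref{9}); as a byproduct $u_2$ coincides with $u_1$ term by term, since $i\cdot(-i)=1$.

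The routine algebra of the recursion is not the real issue; the delicate point is the legitimacy of the termwise manipulations — differentiating $\sum\frac{(it)^k}{k!}w_k$ in $t$, applying $\Delta$ under the summation sign, and then matching coefficients. This is where the hypotheses enter: $u_0\in\mathbf{C}^{\infty}(\mathbf{R}^3)$, together with the uniform convergence on each bounded closed $\Omega\subset\mathbf{R}^3$ of $\sum_k\frac{\delta^k}{k!}\Delta^k u_0$ and of its termwise derivatives up to second order, guarantees for $0<t<\delta/a^2$ that the relevant series — which, once~(\ref{9}) is obtained, are precisely those built from $u_1$ — converge uniformly on compact subsets, so that $\partial_t$, $\Delta$ and the summation commute and the coefficient comparison is valid. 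Verifying these convergence and interchange conditions carefully is, I expect, the main obstacle; the recursion and the ensuing induction then close the argument at once.
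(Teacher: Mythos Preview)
Your proposal is correct and follows essentially the same route as the paper: termwise differentiation of the series for $u_2$ yields the recursion $w_{k+1}=-ia^2\Delta w_k$, the initial condition gives $w_0=u_0$, and induction produces~(\ref{9}). The only minor omission is that the paper also explicitly checks that $u_1$ solves~(\ref{8}) (a one-line termwise differentiation) and then invokes uniqueness of the Cauchy problem to conclude $u_1=u_2$; your observation that the two series agree term by term once~(\ref{9}) is known accomplishes the same thing more directly.
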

\begin{proof}
Indeed,
\begin{eqnarray*}
\dfrac{\partial u_1}{\partial t}(x,y,z,t)&=& a^2 \sum_{k=0}^{+\infty}
\dfrac{(a^2)^kt^k}{k!}\Delta^{k+1}u_0(x,y,z)\Longleftrightarrow\\
\dfrac{\partial u_1}{\partial t}(x,y,z,t)&=&a^2\Delta\bigg[\sum_{k=0}^{+\infty}
\dfrac{(a^2)^kt^k}{k!}\Delta^ku_0(x,y,z)\bigg]\Longleftrightarrow\\
\dfrac{\partial u_1}{\partial t}(x,y,z,t)&=& a^2\Delta u_1(x,y,z, t).
\end{eqnarray*}
 Then, $u_1$ is a solution of (\ref{8}) such that $u_1(x,y,z, t)|_{t=0}=u_0(x,y,z)$ .
 On the other hand, if $u_2$ is a solution of (\ref{8}) satisfying $u_2(x,y,z, t)|_{t=0}=u_0(x,y,z),$  we can write $$
  \dfrac{\partial u_2}{\partial t}(x,y,z,t)= \sum_{k=0}^{+\infty}\dfrac{t^k}{k!}i^{k+1}w_{k+1},\,\,\, a^2\Delta u_2= \sum_{k=0}^{+\infty}\dfrac{t^k}{k!}i^ka^2\Delta w_k$$
which implies 
\begin{eqnarray}
 w_{k+1}= -ia^2\Delta w_k.  
\end{eqnarray}
The last relation allows to write  
\begin{eqnarray*}
w_1= -ia^2\Delta Y^0,\,\,\, w_2=(-ia^2)^2\Delta^2 Y^0,\,\,\,
  w_3= (-ia^2)^3\Delta^3 Y^0. 
\end{eqnarray*}
 By induction, we have $w_k= (-ia^2)^k\Delta^kY^0 $.
  By virtue of the uniqueness of the solution of Cauchy's problem, we obtain
$u_1(x,y,z,t)=u_2(x,y,z,t)$ which implies 
\begin{eqnarray*}
 i^kw_k = (a^2)^k\Delta^ku_0(x,y,z).
  \end{eqnarray*}
 Thus, $w_k$ can be expressed as 
\begin{eqnarray*}
 w_k = (-ia^2)^k\Delta^k u_0(x,y,z).
\end{eqnarray*}
That ends the proof.
 \end{proof}
 
\section{Applications}
\subsection{Example 1}
We consider the RLW equation   \begin{eqnarray}
\dfrac{\partial u}{\partial t}+\dfrac{1}{2} \dfrac{\partial u^2}{\partial x}
=\dfrac{\partial^3 u}{\partial x^2\partial t},\,\,\,-\infty < x < +\infty,\,\, t> 0,
\end{eqnarray}
subjected to the initial condition
\begin{eqnarray}  u(x,t)|_{t=0}= x.  \end{eqnarray}
Using the recursive formula (\ref{3}), we have 
\begin{eqnarray} w_0^1&=&x  \\
 i \dfrac{\partial^2}{\partial x^2}w_{n+1}^1- iw_{n+1}^1&=&
\dfrac{1}{2}\dfrac{\partial }{\partial x}\sum_{k= 0}^n C_n^kw_k^1 w_{n-k}^1.
\end{eqnarray}
Finding the particular solutions of the previous recursive 
relation in the form $w_n^1=a_n x$, we obtain  
$w_1^1= ix,\,\,\, w_2^1= 2!i^2x,\,\,\,w_3^1= 3!i^3x,\,\,\,w_4^1
= 4!i^4x,\,\,\,w_5^1= 5!i^5x$. By induction we obtain $w_n^1= n!i^nx$.\\
By virtue of the expansion of $u(x,t),$ we get
\begin{eqnarray}
u(x,t)= \sum_{n\geq 0}\dfrac{(it)^n}{n!}n!i^nx= \sum_{n\geq 0}
\dfrac{(i^2t)^n}{n!}x=\sum_{n\geq 0}(-t)^nx=\dfrac{x}{1+t}.\label{10}
\end{eqnarray} 
We can easily see that (\ref{10}) is the exact solution of the initial value problem.

\subsection{Example 2}
We consider   \begin{eqnarray}
\dfrac{\partial u}{\partial t}+ \dfrac{\partial u}{\partial x}&=& 
2\dfrac{\partial^3 u}{\partial x^2\partial t},\,\,\,-\infty < x <+\infty,\,\, t> 0 \\
 u(x,t)|_{t=0}&=& e^{-x}. 
\end{eqnarray}
Using the recursive formula (\ref{3}), we obtain 
\begin{eqnarray}
 w_0^1&=&e^{-x} \\
2i \dfrac{\partial^2}{\partial x^2}w_{n+1}^1- iw_{n+1}^1&=&
\dfrac{\partial }{\partial x}w_n^1,\,\,\,n=0, 1, 2,3,\ldots.
\end{eqnarray}
Thus, determining the particular solutions of the recursive formula we have
$$w_1^1= ie^{-x};\;\; w_2^1= i^2e^{-x};\;\;w_3^1= i^3e^{-x};\;\;w_4^1
= i^4e^{-x}.$$ By induction we obtain $w_n^1= i^ne^{-x}.$
By means of the expansion of $u(x,t),$ we can write
\begin{eqnarray}
u(x,t)= \sum_{n\geq 0}\dfrac{(it)^n}{n!}i^ne^{-x}= \sum_{n\geq 0}\dfrac{(i^2t)^n}{n!}e^{-x}=\sum_{n\geq 0}
\dfrac{(-t)^n}{n!}e^{-x}=e^{-t-x}
\end{eqnarray}
as the exact solution of the initial value problem.
 \subsection{Example 3}
Find the solution of the initial value problem
  \begin{eqnarray}
\dfrac{\partial u}{\partial t}+ 2\dfrac{\partial^4 u}{\partial x^4}&=& 
\dfrac{\partial^3 u}{\partial x^2\partial t} \\
 u(x,t)|_{t=0}&=& \sin x. \end{eqnarray}
By virtue of the recursive formula (\ref{3}), we get 
\begin{eqnarray}  w_0^1&=& \sin x  \\
i \dfrac{\partial^2}{\partial x^2}w_{n+1}^1- iw_{n+1}^1&=&
2\dfrac{\partial^4 }{\partial x^4}w_n^1. 
\end{eqnarray}
Thus, $w_1^1= i\sin x;\;\;\;w_2^1= i^2\sin x;\;\; w_3^1= i^3\sin x;\;\; w_4^1
= i^4\sin x;\;\;w_5^1= i^5\sin x$. By induction,  we obtain $w_n^1= i^n\sin x$ and
\begin{eqnarray} u(x,t)= \sum_{n\geq 0}\dfrac{(it)^n}{n!}i^n\sin x= 
\sum_{n\geq 0}\dfrac{(i^2t)^n}{n!}\sin x=\sum_{n\geq 0}
\dfrac{(-t)^n}{n!}\sin x=e^{-t}\sin x \end{eqnarray}
 as the exact solution of the initial value problem.
 \subsection{Example 4}
Consider a homogeneous ball with constant radius $R$.
The temperature $T$ of this ball depending on the variable radius $r$ 
and the time $t$ is governed by the heat conduction equation \cite{Vladimirov}
\begin{eqnarray}
\frac{\partial T}{\partial t}= a^2 \Big(\frac{\partial^2 T}{\partial r^2}
+\frac{2}{r}\frac{\partial T}{\partial r}\Big) \label{11} \end{eqnarray}
subject to the initial condition
\begin{eqnarray} T(r,t)|_{t=0}= T_0(r) \label{12} \end{eqnarray}
and to boundary conditions
\begin{eqnarray} \Big(\frac{\partial T}{\partial r}(r,t) 
+h T(r,t)\Big)|_{r=R}= 0. \label{13} \end{eqnarray}
In order to solve the latter problem, we undertake the change of variables $V=rT$ and thus,
\begin{eqnarray}
\frac{\partial V}{\partial t}&=& a^2\frac{\partial^2 V}{\partial r^2}\label{14}\\
V(r,t)|_{t=0}&=& rT_0(r)\label{15}\\
V(r,t)|_{r=0}&=&0;\quad \Big[\frac{\partial V}{\partial r}(r,t) 
+\big(h-\frac{1}{R}\big)V(r,t)\Big]|_{r=R}= 0.\label{16}
\end{eqnarray}
By virtue of the recursive relation (\ref{9}), we have
\begin{eqnarray}  w_k= (-ia^2)^k\dfrac{d^{2k}}{dr^{2k}}[rT_0(r)]
\end{eqnarray} 
  and the solution can be expressed in the form
\begin{eqnarray}
V(r, t)= \sum_{n\geq 0}\dfrac{(it)^n}{n!}(-ia^2)^n\dfrac{d^{2n}}{dr^{2n}}[rT_0(r)]
= \sum_{n\geq 0}\dfrac{(a^2t)^n}{n!}\dfrac{d^{2n}}{dr^{2n}}[rT_0(r)]. \end{eqnarray}
Going back to the initial variable, we get
\begin{eqnarray}
T(r, t)= \dfrac{1}{r}\sum_{n\geq 0}\dfrac{(a^2t)^n}{n!}\dfrac{d^{2n}}{dr^{2n}}[rT_0(r)].
\end{eqnarray}
\subsection{Example 5}
We consider the time-dependent Navier-Stokes problem 
in the primitive variables \cite{Kharrat}\cite{Eglit}
\begin{eqnarray}
\frac{\partial u}{\partial t} - \nu \Delta u + \nabla p 
&=& f \quad \rm{in}\quad \Omega \times [0,T]\label{17}\\
 \nabla.u &=& 0\quad \rm{in}\quad \Omega \times [0,T]\label{18}\\
 u &=& 0\quad \rm{on}\quad \Gamma \times ]0,T[\label{19}\\
 u_{|t=0}&=& u_0 \quad \rm{in}\quad \Omega, \label{20}
\end{eqnarray}
where $\Omega $ is the bounded connected domain of $\mathbb{R}^d$ $(d=2, 3)$ 
with the Lipschitz continuous boundary $\Gamma$; the unknowns
are the velocity $u = (u_1(x, y, z, t), u_2(x, y, z, t), u_3(x, y, z, t))$ 
and the pressure $p=p(x, y, z, t)$; the data stand for
$f=(f_1(x, y, z, t), f_2(x, y, z, t), f_3(x, y, z, t))$ which represents 
the force on the body and $u_0=u_0(x, y, z) $, the initial condition; $\nu $ 
is the kinematic viscosity assumed to be a positive constant.

Using Bagarello's approach, we provide analytical solutions to the linearized 
Navier-Stokes equations (\ref{17}) satisfying the initial and boundary 
conditions (\ref{18})-(\ref{20}) and the pressure.

We can state the main result as:
\begin{theorem}
The velocity $u$ and the pressure $p$ of the incompressible fluid flow 
described by the Navier-Stokes equations are defined by
 \begin{eqnarray}
  u(x,y,z,t) &=& -\nabla\times\Delta^{-1}\psi(x, y, z, t) + 
  \nabla\varphi(x, y, z, t)\quad \rm{in}\quad \Omega\times [0,T] \label{21}\\
 p(x, y, z, t) &=& p_0 +\frac{\partial \varphi}{\partial t}(x_0, y_0, z_0, t_0) 
 - \nabla.(\Delta^{-1}f)(x_0, y_0, z_0, t_0)\nonumber\\
 &&-\frac{\partial \varphi}{\partial t}(x, y, z, t) 
 + \nabla.(\Delta^{-1}f)(x, y, z, t),\label{22} \end{eqnarray}
where  \begin{eqnarray}
\psi(x, y, z, t)&=& \sum_{n\geq 0}\dfrac{(\nu t)^n}{n!}\Delta^n
\big[\nabla\times u_0(x, y, z)\big] + \phi(x, y, z, t) \label{23}\\
 \Delta^{-1}v(x,y,z,t)&=& \int_0^T\int_{\Omega}(4\pi \tau)^{-3/2}
 \exp\bigg\{\dfrac{-|\omega-\xi|^2}{\tau}\bigg\}v(\xi,t)d\xi d\tau, \label{24}\\
 \omega&=&(x, y, z),  \nonumber \end{eqnarray}
 $\phi$ a particular solution of 
 \begin{eqnarray} \frac{\partial \psi}{\partial t} - \nu \Delta \psi 
&=& \nabla\times f \quad \rm{in}\quad \Omega \times [0,T],
\end{eqnarray}
$\varphi$ is a given harmonic potential and its derivatives are decreasing 
functions when $\sqrt{x^2 +y^2 + z^2}$ tends to $+\infty$. 
\end{theorem}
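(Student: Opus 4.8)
The plan is to reduce the linear system (\ref{17})--(\ref{20}) to a heat equation for the vorticity, to solve that equation by the operator formula (\ref{9}), and then to recover the velocity and the pressure through a Helmholtz decomposition.

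First I would apply $\nabla\times$ to the momentum equation (\ref{17}). Since $\nabla\times\nabla p=0$ and $\nabla\times$ commutes with $\partial_t$ and $\Delta$, the vorticity $\psi:=\nabla\times u$ satisfies $\partial_t\psi-\nu\Delta\psi=\nabla\times f$ with $\psi|_{t=0}=\nabla\times u_0$. Writing $\psi=\psi_h+\phi$, where $\phi$ is the particular solution of $\partial_t\phi-\nu\Delta\phi=\nabla\times f$ given by the Duhamel formula (so $\phi|_{t=0}=0$, and, since the heat semigroup commutes with $\nabla\cdot$, also $\nabla\cdot\phi=0$), the homogeneous part solves $\partial_t\psi_h=\nu\Delta\psi_h$ with $\psi_h|_{t=0}=\nabla\times u_0$. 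Applying (\ref{9}) componentwise with $a^2=\nu$ and $Y^0=\nabla\times u_0$ yields $\psi_h=\sum_{n\geq 0}\frac{(\nu t)^n}{n!}\Delta^n[\nabla\times u_0]$, which is (\ref{23}); here I would invoke the standing hypotheses of the heat-equation theorem (uniform convergence on bounded closed subsets of $\Omega$ of the series and of its termwise spatial derivatives up to the order needed) to make $\psi_h$ a classical solution.

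Next I would reconstruct $u$. Since $\nabla\cdot(\nabla\times u_0)=0$ and $\nabla\cdot\phi=0$, the field $\psi$ is divergence-free, hence so is $\Delta^{-1}\psi$, where $\Delta^{-1}$ is the integral operator (\ref{24}); I would check that $\Delta^{-1}$ acts as a right inverse of $\Delta$ and commutes with $\nabla$ and $\nabla\times$. The identity $\nabla\times(\nabla\times V)=\nabla(\nabla\cdot V)-\Delta V$ then gives $\nabla\times(-\nabla\times\Delta^{-1}\psi)=-\nabla(\nabla\cdot\Delta^{-1}\psi)+\psi=\psi$ and $\nabla\cdot(-\nabla\times\Delta^{-1}\psi)=0$. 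Adding $\nabla\varphi$ with $\varphi$ harmonic alters neither quantity, because $\nabla\times\nabla\varphi=0$ and $\nabla\cdot\nabla\varphi=\Delta\varphi=0$; thus $u=-\nabla\times\Delta^{-1}\psi+\nabla\varphi$ is divergence-free with vorticity $\psi$, which is (\ref{21}). Using $\nabla\cdot u_0=0$ one finds $u|_{t=0}=u_0+\nabla\varphi|_{t=0}$, and the free harmonic datum $\varphi$ is selected so as to enforce simultaneously the initial condition (\ref{20}) and the no-slip condition (\ref{19}) on $\Gamma$; this is the Laplace boundary-value problem producing the harmonic potential, with derivatives decaying at infinity, that appears in the statement.

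Finally, to obtain the pressure I would substitute (\ref{21}) into (\ref{17}). From $\Delta u=-\nabla\times\psi$ and the vorticity equation one gets $\partial_t u-\nu\Delta u=-\nabla\times\Delta^{-1}(\nabla\times f)+\nabla\partial_t\varphi$, and the identity $\nabla\times\Delta^{-1}(\nabla\times f)=\nabla(\nabla\cdot\Delta^{-1}f)-f$ reduces (\ref{17}) to $\nabla p=\nabla[\nabla\cdot(\Delta^{-1}f)-\partial_t\varphi]$; integrating in space and normalising $p$ to take the value $p_0$ at the reference point $(x_0,y_0,z_0,t_0)$ gives (\ref{22}). I expect the algebra to be routine, and the two real obstacles to be: (i) showing that the operator (\ref{24}) genuinely inverts $\Delta$ on $\Omega$ and commutes with curl and gradient --- a statement about the heat potential on a Lipschitz domain in which boundary contributions have to be controlled; and (ii) the solvability of the harmonic corrector $\varphi$ realising both the initial and the no-slip conditions, which is the Helmholtz/Leray-projection step and the only place where the geometry of $\Omega$ actually enters. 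By comparison, convergence of the operator series for $\psi_h$ follows immediately from the hypotheses already imposed in the heat-equation theorem.
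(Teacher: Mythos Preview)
Your proposal follows essentially the same route as the paper: take the curl of the momentum equation to obtain a heat equation for the vorticity $\psi$, solve its homogeneous part by the operator series (\ref{9}) and add a particular solution $\phi$, recover $u$ from $\Delta u=-\nabla\times\psi$ via the integral inverse (\ref{24}) plus a harmonic corrector $\nabla\varphi$, and finally substitute back into (\ref{17}) and use $\nabla\times\nabla\times f=\nabla(\nabla\cdot f)-\Delta f$ to reduce $\nabla p$ to a pure gradient. Your write-up is in fact more careful than the paper's own argument --- you explicitly check that $\psi$ is divergence-free and you flag the two genuine analytical issues (that (\ref{24}) really inverts $\Delta$ on $\Omega$ and commutes with curl, and that $\varphi$ can be chosen to enforce (\ref{19})--(\ref{20})), both of which the paper passes over in silence.
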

\begin{proof}
Indeed, applying the operator $\nabla$ to equation (\ref{17}) and setting $ \psi=\nabla\times u,$ 
 we have
\begin{eqnarray}
\frac{\partial \psi}{\partial t} - \nu \Delta \psi &=& \nabla\times f \quad \rm{in}\quad \Omega \times [0,T]\label{25}\\
 \psi_{|t=0}&=&\nabla\times u_0 \quad \rm{in}\quad \Omega. \label{26}
\end{eqnarray}
Setting $\nabla\times f =0$, substituting $ \psi=\sum_{n\geq 0}\dfrac{(it)^n}{n!}[H,Y^0]_n $ in the equation (\ref{25}) and by virtue of the previous theorem, we have 
 \begin{eqnarray}
[H,Y^0]_n=(-i\nu)^n\Delta^n Y^0,\,\,\, \psi(x, y, z, t)= \sum_{n\geq 0}\dfrac{(\nu t)^n}{n!}\Delta^n Y^0,\,\,\,Y^0=\nabla\times u_0.
\end{eqnarray} 
Therefore, we can express the solution $\psi $ to the problem (\ref{25}) and (\ref{26}) by
\begin{eqnarray}
\psi(x, y, z, t)= \sum_{n\geq 0}\dfrac{(\nu t)^n}{n!}\Delta^n Y^0 + \phi(x, y, z, t),
\end{eqnarray} 
where $\phi$ is a particular solution of equation (\ref{25}).\\
Applying the operator $\nabla$ to equation $\psi=\nabla\times u$ and taking account of equation(\ref{18}), we obtain the equation with unknown $u$
\begin{eqnarray}
\Delta u= -\nabla\times\psi
\end{eqnarray}
 which allows to get the solution $u$ in the form \cite{Maslov}
\begin{eqnarray}
 u(x, y, z, t)= -\nabla\times\Delta^{-1}\psi(x, y, z, t) + \nabla\varphi(x, y, z, t),
\end{eqnarray}
where the inverse operator $\Delta^{-1}$ of $\Delta$ is given by
\begin{eqnarray}
 \Delta^{-1}v(x,y,z,t)= \int_0^T\int_{\Omega}(4\pi \tau)^{-3/2}\exp\bigg\{\dfrac{-|\omega-\xi|^2}{\tau}\bigg\}v(\xi,t)d\xi d\tau,\,\,\,\omega=(x, y, z),
\end{eqnarray}
 $\varphi$ a given harmonic potential of the fluid and its derivatives are decreasing when 
 $\sqrt{x^2+ y^2+z^2}$ approaches $+\infty$. Then we derive the pressure $p$ from equation(\ref{17}) as
\begin{eqnarray*}
\nabla p &=& f -\frac{\partial u}{\partial t} + \nu \Delta u\Longleftrightarrow\\
\nabla p &=& f-\frac{\partial}{\partial t}\big(-\nabla\times\Delta^{-1}\psi + \nabla\varphi\big)-\nu\nabla\times \psi\\
  &=& f -\frac{\partial }{\partial t}(\nabla\varphi)
 + \Delta^{-1}\Big(\nabla\times\dfrac{\partial\psi}{\partial t}\Big)-\nu\nabla\times\psi \\
&=& f -\frac{\partial }{\partial t}(\nabla\varphi)
+ \Delta^{-1}\Big\{\nabla\times\Big[\dfrac{\partial\psi}{\partial t}-\nu\Delta\psi\Big]\Big\}\\
&=& f -\frac{\partial }{\partial t}(\nabla\varphi)
+ \Delta^{-1}\Big[\nabla\times\nabla\times f\Big]\\
& =&\nabla\Big[-\frac{\partial\varphi }{\partial t} + \nabla.\big(\Delta^{-1}f\big)\Big]
\end{eqnarray*}
which reduces to
\begin{eqnarray}
  p = -\frac{\partial \varphi}{\partial t} + \nabla.(\Delta^{-1}f) + C,\quad C=\rm{const}.
\end{eqnarray}
For measured stagnation pressure $p(x, y, z, t)=p_0 $ at the point $(x_0, y_0, z_0, t_0),$  
  the pressure $p$ exerted on the fluid flow is defined by
\begin{eqnarray*}
p(x, y, z, t)&=& p_0 +\frac{\partial \varphi}{\partial t}(x_0, y_0, z_0, t_0)
 - \nabla.(\Delta^{-1}f)(x_0, y_0, z_0, t_0)\\
 &&-\frac{\partial \varphi}{\partial t}(x, y, z, t) + \nabla.(\Delta^{-1}f)(x, y, z, t).
\end{eqnarray*}
That ends the proof.
\end{proof}
Without loss of generality, we set  
\begin{eqnarray*}
\nabla\times u_0(x,y, z)&=& (cos y\cos z, \sin(x-y-z), e^{x+y+z}),\\
\nabla\times f(x, y, z,t)&=& (t\cos x, e^t, tz\sin x)   
\end{eqnarray*}
and thus obtain
\begin{eqnarray}
\psi(x,y,z,t)= \begin{pmatrix}
e^{-2\nu t}\cos y\cos z+\nu^{-2}(-1+\nu t+e^{-\nu t})\cos x\\
e^{-3\nu t}\sin(x-y-z)+e^t -1\\
e^{3\nu t+x+y+z}+ \nu^{-2}(-1+\nu t+e^{-\nu t})z\sin x
\end{pmatrix}
\end{eqnarray}
\begin{eqnarray}
\nabla\times \psi(x,y,z,t)=\begin{pmatrix}
e^{3\nu t+x+y+z}+e^{-3\nu t}\cos(x-y-z)\\
 - e^{3\nu t+x+y+z}-e^{-2\nu t}\cos y\sin z
 -\nu^{-2}(-1+\nu t+e^{-\nu t})z\cos x \\
  e^{-2\nu t}\sin y\cos z + e^{-3\nu t}\cos(x-y-z)
\end{pmatrix} 
\end{eqnarray}
\begin{eqnarray}
 \Delta^{-1}(\nabla\times\psi(x,y,z,t))= \int_0^T\int_{\Omega}(4\pi\tau)^{-3/2}\exp\bigg\{\dfrac{-|\omega-\xi|^2}{\tau}\bigg\}(\nabla\times\psi(\xi,t))d\xi d\tau.
\end{eqnarray}
Choosing the harmonic potential $\varphi$ in the form
\begin{eqnarray}
\varphi(x,y,z,t)=\dfrac{t}{\sqrt{x^2+y^2+z^2}},\,\,\,(x, y, z)\neq (0,0,0),
\end{eqnarray}
we can express the velocity $u$ and the pressure $p$ as 
\begin{eqnarray*}
u(x,y,z,t)&=&
-\int_0^T\int_{\Omega}(4\pi \tau)^{-3/2}\exp\bigg\{\dfrac{-|\omega-\xi|^2}{\tau}\bigg\}(\nabla\times\psi(\xi,t))d\xi d\tau\\
&&-t\bigg(x (x^2+y^2+z^2)^{-3/2}, y(x^2+y^2+z^2)^{-3/2},  z(x^2+y^2+z^2)^{-3/2}\bigg)\\
p(x, y, z, t)&=& p_0 +
\dfrac{1}{\sqrt{x_0^2 + y_0^2 + z_0^2}}- \dfrac{1}{\sqrt{x^2 + y^2 + z^2}}.
\end{eqnarray*}

\section{Concluding remarks}
%The theoretical framework developed by Bagarello, applied to partial differential equations and systems,  has proven
%its efficiency for their analysis. In addition, we have shown through the present work that we can get explicit analytical 
% solutions to PDEs using the same method enriched by some recursive relations.

The non-commutative strategy developed by Bagarello
 for the analysis of systems of ordinary differential 
equations (ODEs) has been extended to a class of 
partial differential equations. Systems of 
PDEs have been solved using an unbounded self-adjoint, densely defined, Hamiltonian operator and  a recursion relation which provides  a multiple 
commutator and  a power series solution.
Numerous examples like the  equations of diffusion, RLW,  heat conduction and Navier-Stokes equations have been given in this work proving the efficiency  of considered method also for the analysis of relevant PDEs widely used in mathematical physics. In that sense, this work can be considered as a good supplement  to Bagarello's approach.

\end{document}